\newtheorem{thm}{Theorem}[section]
\newtheorem{prop}[thm]{Proposition}
\newtheorem{defn}[thm]{Definition}
\newtheorem{rem}[thm]{Remark}
\newtheorem{ass}[thm]{Assumption}
\def\l     {\left}
\def\r     {\right}
\def\<     {\langle}
\def\>     {\rangle}
\def\calB  {{\cal B}}
\def\calF  {{\cal F}}
\def\bbC   {{\mathbb C}}
\def\bbE   {{\mathbb E}}
\def\bbF   {{\mathbb F}}
\def\bbP   {{\mathbb P}}
\def\bbR   {{\mathbb R}}
\def\tP    {\bbP^\ast}
\def\tN    {\widetilde{N}}
\begin{document}
\title{On the difference between locally risk-minimizing and delta hedging strategies for exponential L\'evy models}
\author{Takuji Arai\footnote{
       Department of Economics, Keio University, 2-15-45 Mita,
       Minato-ku, Tokyo, 108-8345, Japan \ email:arai@econ.keio.ac.jp} and
       Yuto Imai\footnote{
       Department of Mathematics, Waseda University, 3-4-1 Okubo,
       Shinjyuku-ku, Tokyo, 169-8555, Japan \ email:y.imai@aoni.waseda.jp}}
\maketitle

\begin{abstract}
We discuss the difference between locally risk-minimizing and 
delta hedging strategies for exponential L\'evy models, 
where delta hedging strategies in this paper are defined under the minimal martingale measure.
We give firstly model-independent upper estimations for the difference.
In addition we show numerical examples for two typical exponential L\'evy models: Merton models and variance gamma models.
\end{abstract}

\section{Introduction}

The concept of local risk-minimization 
is widely used for contingent situations in an incomplete market framework.
Local risk-minimization is closely related to an equivalent martingale measure which is well-known as 
the minimal martingale measure (MMM).
For more details on 
local risk-minimization, see \cite{AIS} and \cite{AS}.
Delta hedging, which is also a well-known hedging method and often has been used by practitioners,
 is given by differentiating the option price
under a certain martingale measure with respect to the underlying asset price.
Due to the relationship between local risk-minimization and the MMM, we consider delta hedging under the MMM.
Its precise definition will be introduced in Section 2.

\cite{AS} showed explicit representations of local risk-minimizing (LRM) strategies for call options
by using Malliavin calculus for L\'evy processes based
on the canonical L\'evy space. 
On the other hand, Carr and Madan introduced a numerical method for valuing options based on the
fast Fourier transform (FFT) in \cite{CM}.
Carr and Madan's method was used in \cite{AIS} to compute LRM strategies of 
call options for exponential L\'evy models.
In particular, Merton models and variance gamma (VG) models were discussed
as typical examples of exponential L\'evy models.

The main motivation of this paper is to investigate whether we can use delta hedging strategies as a substitute for LRM strategies,
since we can compute delta hedging strategies much easier than LRM strategies in general.
For this purpose, we analyze the difference between the two strategies both mathematically and numerically.
First, using \cite{AIS}, we shall obtain model-independent estimations among exponential L\'evy models for the difference.
Second, in order to investigate how near the two strategies are around ``at the money'',
we provide numerical experiments for two typical exponential L\'evy models: Merton models and VG models.
Merton models are composed of a Brownian motion and compound Poisson jumps with normally distributed jump sizes.
VG models, which are exponential L\' evy processes with infinitely many jumps in any finite time interval and no Brownian component, are
the second example.

The outline of this paper is as follows:
after giving notations and preliminaries in Section 2,
we show two model-independent estimations in Section 3.
Section 4 is devoted to numerical experiments.
Conclusions are given in Section 5.
Remark that \cite{IA} treated the same problem as ours, although all results obtained in \cite{IA} are model-dependent. 
On the other hand, we obtain in this paper model-independent estimations.
In addition we shall compute numerically upper estimations of the difference between the two strategies around ``at the money.''


\section{Notations and preliminaries}

We consider a financial market composed of one risk-free asset and
one risky asset with finite maturity $T>0$.
For simplicity, we assume that market's interest rate is zero,
that is, the price of the risk-free asset is 1 at all times.
The fluctuation of the risky asset is assumed to be given by
an exponential L\'evy process $S$ on a complete probability space
$(\Omega, \mathcal{F}, \mathbb{P})$, described by
\begin{align*}
S_t:=S_0 \exp \left \{\mu t+\sigma W_t+\int_{\bbR_0}x \tN([0,t],dx)\right\}
\end{align*}
for any $t\in[0,T]$, where $S_0>0$, $\mu\in\bbR$, $\sigma>0$, 
and $\bbR_0:=\bbR\setminus\{0\}$.
Here $W$ is a one-dimensional standard Brownian motion
and $\tN$ is the compensated version of a Poisson random measure $N$.
Denoting the L\'evy measure of $N$ by $\nu$,
we have $\tN([0,t],A)=N([0,t],A)-t\nu(A)$ for any $t\in[0,T]$ and
$A\in\calB(\bbR_0)$.
Now, $(\Omega, \calF, \bbP)$ is taken as the product of a one-dimensional
Wiener space and the canonical L\'evy space for $N$.
In addition, we take $\bbF=\{\calF_t\}_{t\in[0,T]}$ as the completed canonical
filtration for $\bbP$.
For more details on the canonical L\'evy space, see \cite{S07} and \cite{AS}.
Moreover, $S$ is also a solution of the stochastic differential
equation
\[
dS_t=S_{t-}\bigg[\mu^S\,dt+\sigma \,dW_t+\int_{\bbR_0}(e^x-1)\tN(dt,dx)\bigg],
\]
where $\mu^S:=\mu+\frac{1}{2}\sigma^2 +\int_{\bbR_0}(e^x-1-x)\nu(dx)$.
Now, defining $L_t:=\log (S_t / S_{0})$ for all $t\in[0,T]$,
we have that $L$ is a L\'evy process.

Our focus is to compare LRM strategies to delta hedging strategies
for call options $(S_T-K)^+$ with strike price $K>0$.
Now, we give some preparations and assumptions.
Define the MMM $\tP$ as
an equivalent martingale measure under which any square-integrable
$\bbP$-martingale orthogonal to the martingale part of $S$.
Its density is given by
\begin{align*}
 \frac{d\tP}{d\bbP}
=\exp\big\{-\xi W_T-\frac{\xi^2}{2}T+\int_{\bbR_0}\log(1-\theta_x)N([0,T],dx)+T\int_{\bbR_0}\theta_x\nu(dx)\big\},
\end{align*}
where $\xi:=\frac{\mu^S\sigma}{\sigma^2+\int_{\bbR_0}(e^y-1)^2\nu(dy)}$
and $\theta_x:=\frac{\mu^S(e^x-1)}{\sigma^2+\int_{\bbR_0}(e^y-1)^2\nu(dy)}$
for $x\in\bbR_0$.
Remark that our discussion is strongly depending on the results in \cite{AIS}.
Thus, we need the assumptions imposed in \cite{AIS} as follows:
\begin{ass}\label{ass-1}
\begin{enumerate}
\item
      $\int_{\bbR_0}(|x|\vee x^2)\nu(dx)<\infty$\textnormal{,} and
      $\int_{\bbR_0}(e^x-1)^n\nu(dx)<\infty$ for $n=2,4$. \vspace{2pt}
\item $0\geq\mu^S>-\sigma^2-\int_{\bbR_0}(e^x-1)^2\nu(dx)$.
\end{enumerate}
\end{ass}
The first condition ensures
that (i) $\mu^S$, $\xi$, and $\theta_x$
are well defined, (ii) $L$ is square integrable, and (iii)
$\int_{\bbR_0}(e^x-1)^n\nu(dx) < \infty$ for $n=1,3$.
The second condition guarantees that $\theta_x<1$ for any $x\in\bbR_0$.
Now we consider
\begin{align}
\bbE_{\tP}[{\bf 1}_{\{S_T>K\}}S_T\mid\calF_{t-}]\,, \label{I1}
\end{align}
and
\begin{align}
\int_{\bbR_0}\bbE_{\tP}[(S_Te^x-K)^+ -(S_T-K)^+\mid\calF_{t-}] (e^x-1)\nu(dx)\, . \label{I2} 
\end{align}
Noting that (\ref{I1}) and (\ref{I2}) are functions of $S_{t-}$ and $K$, we denote them by $I_{1}(S_{t-}, K)$ and $I_{2}(S_{t-}, K)$, respectively.
LRM strategies are given as a predictable process $LRM(S_{t-}, K )$,
which represents the number of units of the risky asset the investor holds
at time $t$.
Here
its explicit representation for call options $(S_T-K)^+$ is given as follows:
\begin{prop}[Proposition 4.6 of \cite{AS}]
For any $K>0$ and $t\in[0,T]$, 
\begin{align}
\label{eq-prop-AS}
LRM(S_{t-}, K )=\frac{\sigma^2 I_1(S_{t-},K) + I_2(S_{t-},K)}{S_{t-}\big(\sigma^2
      + C_2 \big)}.
\end{align}
where $C_2 := \int_{\bbR_0}(e^x-1)^2\nu(dx)$.
 
\end{prop}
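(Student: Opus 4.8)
\noindent The plan is to combine the general structure of locally risk-minimizing strategies with an It\^o computation. First I would observe that, under Assumption \ref{ass-1}, the mean--variance tradeoff process of $S$ is deterministic: writing $S=S_0+M+A$ for the $\bbP$-canonical decomposition one has $d\<M\>_t=S_{t-}^2(\sigma^2+C_2)\,dt$ and $dA_t=S_{t-}\mu^S\,dt=\alpha_t\,d\<M\>_t$ with $\alpha_t=\mu^S/(S_{t-}(\sigma^2+C_2))$, so that $\int_0^t\alpha_s^2\,d\<M\>_s=\frac{(\mu^S)^2}{\sigma^2+C_2}\,t$, a finite constant times $t$ thanks to Assumption \ref{ass-1}(1). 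Since Assumption \ref{ass-1}(1) also gives $S_T\in L^2(\bbP)$, hence $(S_T-K)^+\in L^2(\bbP)$, and Assumption \ref{ass-1}(2) makes $\tP$ an equivalent martingale measure, the standard $L^2$-existence theorem for the F\"ollmer--Schweizer decomposition applies: $(S_T-K)^+=H_0+\int_0^T LRM(S_{t-},K)\,dS_t+L_T$ with $L$ a square-integrable $\bbP$-martingale orthogonal to $M$, whose value process is $\widehat V_t:=\bbE_{\tP}[(S_T-K)^+\mid\calF_t]$. Taking predictable covariation against $M$ (using $L\perp M$ and that $\int LRM\,dA$ is predictable of finite variation) then yields $LRM(S_{t-},K)=d\<\widehat V,M\>_t/d\<M\>_t$, all brackets under $\bbP$.

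Next I would compute these brackets explicitly. Since $L=\log(S/S_0)$ has independent increments (and, being a L\'evy process, no fixed jump times, so $S_{t-}=S_t$ a.s.), one has $\widehat V_t=F(t,S_t)$ with $F(t,s):=\bbE_{\tP}[(s\,e^{L_T-L_t}-K)^+]$; because $\sigma>0$ the increment $L_T-L_t$ has a smooth density, so $F(t,\cdot)$ is $C^\infty$ for $t<T$, and $F(t,\cdot)$ is $1$-Lipschitz since $\bbE_{\tP}[e^{L_T-L_t}]=1$. Applying the It\^o formula to $F(t,S_t)$ (on $[0,T-\ve]$ and letting $\ve\downarrow0$) with $dS_t=S_{t-}[\mu^S\,dt+\sigma\,dW_t+\int_{\bbR_0}(e^x-1)\tN(dt,dx)]$ shows that the $\bbP$-martingale part of $\widehat V$ is $\int_0^\cdot \partial_s F(s,S_{s-})\sigma S_{s-}\,dW_s+\int_0^\cdot\!\int_{\bbR_0}[F(s,S_{s-}e^x)-F(s,S_{s-})]\tN(ds,dx)$, genuinely square-integrable because $0\le\partial_s F\le1$, $|F(s,S_{s-}e^x)-F(s,S_{s-})|\le S_{s-}|e^x-1|$, and $\sup_{s\le T}\bbE_{\bbP}[S_s^2]<\infty$. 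With $M_t=\int_0^t\sigma S_{s-}\,dW_s+\int_0^t\!\int_{\bbR_0}S_{s-}(e^x-1)\tN(ds,dx)$ one obtains
\begin{align*}
d\<M\>_t&=S_{t-}^2(\sigma^2+C_2)\,dt,\\
d\<\widehat V,M\>_t&=S_{t-}\Big[\sigma^2 S_{t-}\,\partial_s F(t,S_{t-})+\int_{\bbR_0}\big(F(t,S_{t-}e^x)-F(t,S_{t-})\big)(e^x-1)\nu(dx)\Big]dt,
\end{align*}
whence $LRM(S_{t-},K)=\dfrac{\sigma^2 S_{t-}\,\partial_s F(t,S_{t-})+\int_{\bbR_0}(F(t,S_{t-}e^x)-F(t,S_{t-}))(e^x-1)\nu(dx)}{S_{t-}(\sigma^2+C_2)}$.

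Finally I would identify the numerator with $\sigma^2 I_1(S_{t-},K)+I_2(S_{t-},K)$. Differentiating under the expectation gives $\partial_s F(t,s)=\bbE_{\tP}[e^{L_T-L_t}{\bf 1}_{\{s\,e^{L_T-L_t}>K\}}]$, so $s\,\partial_s F(t,s)=\bbE_{\tP}[S_T{\bf 1}_{\{S_T>K\}}\mid\calF_{t-}]$ evaluated at $S_{t-}=s$, i.e.\ $S_{t-}\,\partial_s F(t,S_{t-})=I_1(S_{t-},K)$; here I use $S_T=S_{t-}e^{L_T-L_{t-}}$ and that $L_T-L_{t-}$ is independent of $\calF_{t-}$ with the same law as $L_T-L_t$. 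The same independence argument gives $F(t,S_{t-}e^x)=\bbE_{\tP}[(S_Te^x-K)^+\mid\calF_{t-}]$ and $F(t,S_{t-})=\bbE_{\tP}[(S_T-K)^+\mid\calF_{t-}]$, so the integral term equals $I_2(S_{t-},K)$; substituting into the displayed expression gives (\ref{eq-prop-AS}). I expect the main obstacle to be the first step: verifying carefully that the mean--variance tradeoff is deterministic under Assumption \ref{ass-1} and invoking the correct version of the F\"ollmer--Schweizer existence/characterization theorem, so that $\widehat V=\bbE_{\tP}[(S_T-K)^+\mid\calF_\cdot]$ is the relevant value process and $LRM=d\<\widehat V,M\>/d\<M\>$; the It\^o computation and the conditional-expectation identifications are then routine, modulo the integrability bookkeeping supplied by Assumption \ref{ass-1}(1). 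An alternative, closer to \cite{AS}, is to represent the F\"ollmer--Schweizer integrand directly via Malliavin calculus on the canonical L\'evy space.
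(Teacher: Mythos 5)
Your argument is essentially correct, but it follows a genuinely different route from the one behind the statement: the paper does not prove this proposition, it imports it as Proposition 4.6 of \cite{AS}, where the formula is obtained via Malliavin calculus on the canonical L\'evy space (a Clark--Ocone-type representation of the F\"ollmer--Schweizer integrand under the MMM). You instead use the classical characterization $LRM=d\<\widehat V,M\>/d\<M\>$ together with an It\^o computation for $\widehat V_t=F(t,S_t)$, exploiting the L\'evy/Markov structure under $\tP$. Your preliminary steps check out: the mean--variance tradeoff is indeed the deterministic constant $(\mu^S)^2 t/(\sigma^2+C_2)$, Assumption \ref{ass-1}(2) makes $\tP$ a true equivalent martingale measure with square-integrable density, $(S_T-K)^+\in L^2(\bbP)$, and the bracket identities and the identifications $S_{t-}\partial_sF(t,S_{t-})=I_1(S_{t-},K)$ and $\int_{\bbR_0}(F(t,S_{t-}e^x)-F(t,S_{t-}))(e^x-1)\nu(dx)=I_2(S_{t-},K)$ are correct, yielding exactly (\ref{eq-prop-AS}). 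Two points deserve emphasis. First, for discontinuous $S$ the equivalence between the locally risk-minimizing strategy and the F\"ollmer--Schweizer integrand, and the identification of the value process with $\bbE_{\tP}[(S_T-K)^+\mid\calF_t]$, are themselves nontrivial theorems (Schweizer's theory), which you invoke as ``standard''; this is legitimate, and \cite{AS} relies on the same foundation, but it should be cited precisely rather than treated as folklore. Second, your route needs $F(\cdot,\cdot)$ regular enough for It\^o, which you secure through $\sigma>0$ (smooth transition densities) and the Markov property; the Malliavin-calculus route of \cite{AS} avoids this regularity/Markov argument, which is one reason it was used there, whereas your derivation is more elementary and self-contained given standard F\"ollmer--Schweizer theory. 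With the regularity of $F$ in $t$ and the localization $[0,T-\ve]$, $\ve\downarrow 0$ spelled out, and the integrability bookkeeping you sketch (all of which does hold under Assumption \ref{ass-1}), your proof is a valid alternative derivation of the proposition.
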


In addition, we introduce integral representations given in \cite{AIS} for $I_1(S_{t-},K)$ and $I_2(S_{t-},K)$ in order to see that Carr and Madan's method is available.
The characteristic function of $L_{T-t}$ under $\tP$ is denoted by 
$\phi_{T-t}(z):=\bbE_{\tP}[e^{izL_{T-t}}]$ for $z\in\bbC$.
We induce an integral representation for
$I_1(S_{t-},K)$ with $\phi_{T-t}$ firstly as follows:
\begin{align*}
I_1(S_{t-},K)&=\bbE_{\tP}[{\bf 1}_{\{S_T>K\}}\cdot S_T\mid\calF_{t-}] \nonumber \\
&= \frac{1}{\pi}\int_0^\infty\frac{K^{-\alpha + 1 -iv}}{\alpha-1+iv}
   \phi_{T-t}(v-i\alpha)S_{t-}^{\alpha+iv}dv \nonumber \\
&= \frac{e^k}{\pi}\int_0^\infty e^{-i(v-i\alpha)k}\psi_1(v-i\alpha)dv\, ,
\end{align*}
where $k:=\log K$ and
$\psi_1(z):=\frac{\phi_{T-t}(z)S^{iz}_{t-}}{-1 + iz}$ and $\alpha \in (1, 2]$.
Note that the right-hand side is independent of the choice of $\alpha$.
We turn next to $I_2(S_{t-},K)$.
Denoting $\psi_2(z):=\frac{\phi_{T-t}(z)S_{t-}^{iz}}{(-1 + iz)iz}$, we have
\begin{align*}
I_2(S_{t-},K)
&= \int_{\bbR_0}\bbE_{\tP}[(S_Te^x-K)^+-(S_T-K)^+\mid\calF_{t-}] (e^x-1)\nu(dx) \nonumber\\
&= \frac{1}{\pi}\int_0^\infty K^{-\alpha + 1 - iv}
   \int_{\bbR_0}(e^{(\alpha + iv) x}-1)(e^x-1)\nu(dx)\psi_2(v -i \alpha)dv\,.
\end{align*} 
Regarding $LRM(S_{t-}, K)$, $I_1(S_{t-},K)$, and $I_2(S_{t-},K)$ as functions of $S_{t-}$ and $K$,
we have $I_j(S_{t-},K)/S_{t-}=I_j(1,K/S_{t-})$ for $j=1,2$, and
\[
LRM(S_{t-},K)
=\frac{\sigma^2I_1(1,K/S_{t-})+I_2(1,K/S_{t-})}{
  \sigma^2+ C_2}
\]
from (\ref{eq-prop-AS}).
As a result, $LRM(S_{t-}, K)$ is given as a function of $\chi_{t-} := K/S_{t-}$,
where $\chi_{t-}$ is called \textit{moneyness}.
Thus we denote $LRM(S_{t-}, K)$ by $LRM(\chi_{t-})$.
 
 Next, we define delta hedging strategies.
\begin{defn}
For any $K > 0$ and $s > 0$, 
a delta hedging strategy $\Delta(S_{t-},K)$ under $\tP$ for a call option with strike price $K$ is defined as
\begin{align*}
\Delta(S_{t-},K) := \frac{\partial \bbE_{\tP}[(S_T -K)^+ \mid S_{t-} = s]}{\partial s} \, .
\end{align*}
\end{defn}
Remark that the above definition of delta hedging strategies
coincides with that of usual delta hedging strategies in the Black--Scholes model.
The next theorem follows from a direct calculation.
\begin{thm} 
We have 
\begin{align*}
\Delta(S_{t-},K) &= \frac{I_1(S_{t-},K)}{S_{t-}} \,\, .
\end{align*} 
\end{thm}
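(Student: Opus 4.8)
The plan is to compute the derivative defining $\Delta(S_{t-},K)$ directly, using the explicit law of $S_T$ given $\calF_{t-}$. First I would observe that, by the Markov property of the L\'evy process and the independence of increments, the conditional law of $S_T$ given $\calF_{t-}$ with $S_{t-}=s$ is the law of $s\,e^{L_{T-t}}$, where $L_{T-t}$ is a copy of the log-return over a period of length $T-t$, independent of $\calF_{t-}$. Hence
\begin{align*}
\bbE_{\tP}[(S_T-K)^+\mid S_{t-}=s]=\bbE_{\tP}\big[(s\,e^{L_{T-t}}-K)^+\big],
\end{align*}
and the task reduces to differentiating this scalar function of $s$.

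Next I would differentiate under the expectation sign. The integrand $g(s):=(s\,e^{L_{T-t}}-K)^+$ is, for each fixed realization of $L_{T-t}$, a convex, Lipschitz function of $s$ with derivative $e^{L_{T-t}}{\bf 1}_{\{s\,e^{L_{T-t}}>K\}}$ for all $s$ outside the single point $s=Ke^{-L_{T-t}}$; the exceptional set has probability zero under $\tP$ because $L_{T-t}$ has a density (or at worst one checks that $\bbP(s e^{L_{T-t}}=K)=0$). A domination argument — $|g(s+h)-g(s)|\le |h|\,e^{L_{T-t}}$ and $\bbE_{\tP}[e^{L_{T-t}}]<\infty$ since $S$ is a $\tP$-martingale — justifies exchanging differentiation and expectation, giving
\begin{align*}
\frac{\partial}{\partial s}\bbE_{\tP}\big[(s\,e^{L_{T-t}}-K)^+\big]
=\bbE_{\tP}\big[e^{L_{T-t}}{\bf 1}_{\{s\,e^{L_{T-t}}>K\}}\big]
=\frac{1}{s}\,\bbE_{\tP}\big[{\bf 1}_{\{S_T>K\}}S_T\mid\calF_{t-}\big]\Big|_{S_{t-}=s}.
\end{align*}
By definition \eqref{I1} the right-hand side is $I_1(S_{t-},K)/S_{t-}$ evaluated at $S_{t-}=s$, which is exactly the claimed formula.

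The main obstacle — really the only point requiring care — is the interchange of the derivative and the expectation at the kink of the payoff: one must confirm that the non-differentiability locus $\{s e^{L_{T-t}}=K\}$ is $\tP$-null and produce an integrable dominating function uniformly in $h$ near $s$. Both are routine: the uniform Lipschitz bound $e^{L_{T-t}}$ is $\tP$-integrable precisely because $e^{L_{T-t}}=S_T/S_{t-}$ and $S$ is a $\tP$-martingale (so $\bbE_{\tP}[S_T\mid\calF_{t-}]=S_{t-}$), and the jump part of $L$ together with the Brownian component $\sigma>0$ make the distribution of $L_{T-t}$ atomless, so the bad set has measure zero. Once these two facts are in place the differentiation is immediate and the theorem follows from a line of calculation. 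I would present it compactly, noting that the argument is identical to the computation of the Black--Scholes delta and that the representation of $\bbE_{\tP}[(S_T-K)^+\mid S_{t-}=s]$ in terms of $L_{T-t}$ is the only place where the L\'evy structure enters.
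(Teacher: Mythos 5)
Your proposal is correct and is precisely the ``direct calculation'' the paper invokes without writing out: condition on $\calF_{t-}$ using the independence and stationarity of the increments of $L$ under $\tP$, differentiate $\bbE_{\tP}[(s e^{L_{T-t}}-K)^+]$ under the expectation via the Lipschitz domination $e^{L_{T-t}}$ (integrable since $\tP$ is a martingale measure) and the atomlessness of the law of $L_{T-t}$, and identify the result with $I_1(S_{t-},K)/S_{t-}$. No gaps; the justification of the interchange at the kink is exactly the point that needs (and gets) care.
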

Note that $\Delta(S_{t-},K)$ is given as a function of $\chi_{t-}$ also.
Thus we denote $\Delta(S_{t-},K)$ by $\Delta(\chi_{t-})$.

\begin{rem}
\cite{DGK} studied similar problems to this paper.
They compared some hedging errors among variance-optimal hedge, Black-Scholes hedge, and delta hedge.
\end{rem}

\section{Main results}
We give two estimations of the difference $|LRM(\chi_{t-}) - \Delta(\chi_{t-})|$ as main results of this paper. 
Remark that the estimations given in this section are independent of any exponential L\'evy models.
Throughout this section we fix $t \in [0, T]$ arbitrary. 
We denote $\chi :=\chi_{t-}$ for short, and regard LRM and $\Delta$ as functions of $\chi \in \mathbb{R}^{+}$.
Let $p^\ast$ be the distribution of $L_{T-t}$ under $\tP$, that is, $p^\ast(A) := \tP(L_{T-t} \in A)$ for any $A \in \calB(\bbR_0)$.

First we give an estimation of $|LRM(\chi) - \Delta(\chi)|$, which is useful when $\chi > 0$ is small.
\begin{thm} \label{thm3}
For any $\chi \in \mathbb{R}^+$, we have the following inequality estimation:
\begin{align}
|LRM(\chi) - \Delta(\chi)| \leq \frac{\chi C^-_2}{\sigma^2 + C_2} + \frac{\chi p^\ast((-\infty, \log \chi])}{\sigma^2 + C_2} (C^+_2 - C^-_2) \, , \label{est}
\end{align}
where
\begin{align*}
C_2^+ := \int^{\infty}_{0} (e^x - 1)^2 \nu(dx) \, , \quad \mbox{and} \quad
C_2^- := \int^{0}_{-\infty} (e^x - 1)^2 \nu(dx) \, . 
\end{align*}
Hence we have $|LRM(\chi) - \Delta(\chi)| \leq \mathcal{O}(\chi)$ as $\chi \to 0$.
\end{thm}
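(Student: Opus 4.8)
The plan is to reduce $LRM(\chi)-\Delta(\chi)$ to a single integral against the L\'evy measure, bound the integrand pointwise by an elementary case distinction on the call payoff, and then take conditional expectations using the L\'evy property of $L$ under $\tP$.

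\textbf{Step 1: algebraic reduction.} Combining the representation~(\ref{eq-prop-AS}) of $LRM$ with the identity $\Delta(S_{t-},K)=I_1(S_{t-},K)/S_{t-}$ and writing $I_j:=I_j(S_{t-},K)$, a one-line computation gives
\[
LRM(\chi)-\Delta(\chi)=\frac{\sigma^2 I_1+I_2}{S_{t-}(\sigma^2+C_2)}-\frac{I_1}{S_{t-}}=\frac{I_2-C_2 I_1}{S_{t-}(\sigma^2+C_2)}.
\]
Since $\sigma^2+C_2>0$ and $K/S_{t-}=\chi$, it suffices to bound $|I_2-C_2 I_1|$ and divide through by $S_{t-}(\sigma^2+C_2)$.

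\textbf{Step 2: collapse to a single L\'evy integral.} Using $C_2=\int_{\bbR_0}(e^x-1)^2\nu(dx)$ and the defining formulas for $I_1$ and $I_2$, I would bring everything under one integral sign,
\[
I_2-C_2I_1=\int_{\bbR_0}(e^x-1)\,\bbE_{\tP}\big[g_x(S_T)\mid\calF_{t-}\big]\,\nu(dx),\quad g_x(S):=(Se^x-K)^+-(S-K)^+-(e^x-1)S\,{\bf 1}_{\{S>K\}}.
\]
Assumption~\ref{ass-1} together with the pointwise estimate $0\le g_x(S)\le K|e^x-1|$ obtained in Step~3 bounds the integrand by $K(e^x-1)^2$, which is $\nu$-integrable; this justifies both the interchange of $\nu(dx)$ with $\bbE_{\tP}[\,\cdot\mid\calF_{t-}]$ and the manipulations below.

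\textbf{Step 3: case analysis on $g_x$ (the main step).} The core of the argument is the pointwise identity, valid for every $S>0$:
\[
g_x(S)=(Se^x-K)^+{\bf 1}_{\{S\le K\}}\le K(e^x-1){\bf 1}_{\{S\le K\}}\ \ (x\ge0),\qquad g_x(S)=(K-Se^x)^+{\bf 1}_{\{S>K\}}\le K(1-e^x){\bf 1}_{\{S>K\}}\ \ (x<0).
\]
One verifies this by splitting on $\{S\le K\}$ versus $\{S>K\}$, and, inside $\{S>K\}$, further on $\{Se^x\le K\}$ versus $\{Se^x>K\}$: in the ``matching'' regions the three terms defining $g_x$ cancel identically, whereas in the remaining region $g_x$ reduces to $(Se^x-K)^+$ (for $x\ge0$, $S\le K$) or to $K-Se^x\ge0$ (for $x<0$, $K<S\le Ke^{-x}$), each of which is then estimated by inserting the boundary value $S=K$. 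I expect this bookkeeping --- keeping the indicators and the open/closed endpoints exactly straight --- to be the only genuine obstacle, though it is entirely elementary.

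\textbf{Step 4: conditioning and conclusion.} Because $L$ is a L\'evy process under $\tP$ and $S_{t-}$ is $\calF_{t-}$-measurable, the increment $L_T-L_{t-}$ is independent of $\calF_{t-}$ with law $p^\ast$, and $S_T=S_{t-}e^{L_T-L_{t-}}$; since $\{S_T\le K\}=\{L_T-L_{t-}\le\log\chi\}$ this gives
\[
\bbE_{\tP}[{\bf 1}_{\{S_T\le K\}}\mid\calF_{t-}]=p^\ast((-\infty,\log\chi]),\qquad \bbE_{\tP}[{\bf 1}_{\{S_T>K\}}\mid\calF_{t-}]=1-p^\ast((-\infty,\log\chi]).
\]
Plugging the estimates of Step~3 in, $0\le\bbE_{\tP}[g_x(S_T)\mid\calF_{t-}]\le K|e^x-1|\,p^\ast((-\infty,\log\chi])$ for $x\ge0$ and $\le K|e^x-1|\,(1-p^\ast((-\infty,\log\chi]))$ for $x<0$. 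Substituting into Step~2, taking absolute values inside the integral and splitting $\int_{\bbR_0}=\int_0^\infty+\int_{-\infty}^0$ yields
\[
|I_2-C_2I_1|\le K\,p^\ast((-\infty,\log\chi])\,C_2^+ + K\,(1-p^\ast((-\infty,\log\chi]))\,C_2^-.
\]
Dividing by $S_{t-}(\sigma^2+C_2)$, using $K/S_{t-}=\chi$ and rewriting $C_2^-+p^\ast((-\infty,\log\chi])(C_2^+-C_2^-)$ gives exactly~(\ref{est}). Finally, as $\chi\to0$ we have $\log\chi\to-\infty$, so $p^\ast((-\infty,\log\chi])\to0$; the first term of~(\ref{est}) is $\mathcal{O}(\chi)$ and the second is $o(\chi)$, hence $|LRM(\chi)-\Delta(\chi)|\le\mathcal{O}(\chi)$.
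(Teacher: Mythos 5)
Your proposal is correct and follows essentially the same route as the paper: the difference $I_2-C_2I_1$ is shown to vanish except where $S_T$ and $S_Te^x$ lie on opposite sides of $K$, and on those crossing regions it is bounded by $\chi(e^x-1)^2$ weighted by $p^\ast((-\infty,\log\chi])$ and its complement, exactly as in the paper. The only difference is presentational: you organize the case analysis pointwise through $g_x$, whereas the paper splits the double integral into the regions $D_1,\dots,D_4$ (your crossing sets are precisely $D_2$ for $x>0$ and $D_3$ for $x<0$), so the two arguments are the same estimate in different bookkeeping.
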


\begin{proof}
We denote $I_{1}(1, \chi)$ and $I_{2}(1, \chi)$ by $I_{1}$ and $I_{2}$ for short.
First of all, we decompose $I_2$ into $I_{2} = J_{1} + J_{2}+ J_{3} + J_{4}$.
Here
\begin{align*}
J_{i} := \int_{D_i} \{(e^{y+x} - \chi)^{+} - (e^{y} - \chi)^{+} \} (e^{x} - 1) p^\ast(dy)\nu(dx),
 \quad i = 1, \cdots, 4 \, ,
\end{align*}
where 
\begin{align*}
D_1 := \{ (x,y) | x + y \geq \log \chi , y \geq \log \chi  \} \, , \quad
D_2 := \{ (x,y) | x + y \geq \log \chi , y < \log \chi  \} \, , \\
D_3 := \{ (x,y) | x + y < \log \chi , y \geq \log \chi  \} \, , \quad
D_4 := \{ (x,y) | x + y < \log \chi , y < \log \chi  \} \, . \\
\end{align*}
Thus we have
\begin{align}
|LRM(\chi) - \Delta(\chi)|
&= \l| I_1 - \frac{\sigma^2 I_1 + I_2}{\sigma^2 + C_2} \r| \nonumber \\
&= \frac{1}{\sigma^2 + C_2} \l| C_2 I_1 -J_1 -J_2 - J_3 - J_4 \r| \, . \label{i}
\end{align}
Noting that $J_{4} = 0$ and 
\begin{align*}
C_2 I_1
&= \int_{D_1 \cup D_3} e^y (e^x - 1)^2 p^\ast(dy)\nu(dx) \, ,
\end{align*}
we obtain
\begin{align}
C_2 I_1 - J_1 - J_3 - J_{4}
&= \int_{D_3} \l\{ e^y(e^x -1)^2 + (e^y - \chi)(e^x -1) \r\} p^\ast(dy)\nu(dx) \nonumber \\
&= \int_{D_3} ( e^{y+x} - \chi )(e^x -1) p^\ast(dy)\nu(dx) \nonumber \\
&= \int^0_{-\infty} \int^{\log \chi - x}_{\log \chi} (e^{y+x} - \chi)(e^x - 1) p^\ast(dy)\nu(dx) \nonumber \\
&\leq \int^0_{-\infty} \int^{\log \chi - x}_{\log \chi} \chi (e^x - 1)^2 p^\ast(dy)\nu(dx) \nonumber \\
&\leq \chi p^\ast([\log \chi, \infty))C^-_{2} \, . \label{ro}
\end{align}
In the same manner, we have
\begin{align}
J_2
&= \int^{\infty}_0 \int^{\log \chi }_{\log \chi - x} (e^{y+x} - \chi)(e^x - 1) p^\ast(dy)\nu(dx) \nonumber \\
&\leq \int^{\infty}_0 \int^{\log \chi }_{\log \chi -x} \chi (e^x - 1)^2 p^\ast(dy)\nu(dx) \label{A} \\
&\leq \chi p^\ast(( -\infty, \log \chi])C^+_{2} \, . \label{ha}
\end{align}
From (\ref{i}), (\ref{ro}), and (\ref{ha}), we can conclude
\begin{align}
|LRM(\chi) - \Delta(\chi)| 
&\leq \frac{1}{\sigma^2 + C_2} \{ |\chi p^\ast ([\log \chi, \infty))C^-_2| +|J_2| \} \label{I}\\
&\leq \frac{\chi}{\sigma^2 + C_2} \l\{ p^\ast([\log \chi, \infty))C^-_{2}  +  p^\ast(( -\infty, \log \chi])C^+_{2}  \r\} \nonumber \\
&= \frac{\chi C^-_2}{\sigma^2 + C_2} + \frac{\chi p^\ast((-\infty, \log \chi])}{\sigma^2 + C_2} (C^+_2 - C^-_2)\, . \nonumber 
\end{align}
This completes the proof of Theorem \ref{thm3}.
\end{proof}

Next we give the second estimation of $|LRM(\chi) - \Delta(\chi)|$ for large $\chi$.
\begin{thm}
Suppose
\begin{align}
\int^\infty_0 \frac{|\phi_{T-t}(v - 2i)|}{1 + v} dv  <\infty \, . \label{ass1}
\end{align}
Then there exists a constant $\mathcal{C} > 0$ such that 
\begin{align}
|LRM(\chi) - \Delta(\chi)| \leq \frac{\mathcal{C}}{\chi} \label{est2}
\end{align}
for any $\chi \in \mathbb{R}^{+}$. 
So that, $|LRM(\chi) - \Delta(\chi)| \leq \mathcal{O} (\frac{1}{\chi})$ as $\chi \to \infty$.
\end{thm}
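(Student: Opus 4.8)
The plan is to substitute the endpoint value $\alpha = 2$ into the Carr--Madan integral representations of $I_1(S_{t-},K)$ and $I_2(S_{t-},K)$ recorded in Section 2; this is precisely the value of $\alpha$ for which $\phi_{T-t}(v-i\alpha) = \phi_{T-t}(v-2i)$, matching hypothesis (\ref{ass1}). Abbreviating $I_1 := I_1(1,\chi)$ and $I_2 := I_2(1,\chi)$ and recalling that $\Delta(\chi) = I_1$ together with (\ref{eq-prop-AS}), we have
\[
LRM(\chi) - \Delta(\chi) = \frac{\sigma^2 I_1 + I_2}{\sigma^2 + C_2} - I_1 = \frac{I_2 - C_2 I_1}{\sigma^2 + C_2},
\]
so it suffices to find constants $\mathcal{C}_1, \mathcal{C}_2 > 0$ with $|I_1| \le \mathcal{C}_1/\chi$ and $|I_2| \le \mathcal{C}_2/\chi$ for all $\chi \in \mathbb{R}^+$; then $\mathcal{C} := (\mathcal{C}_2 + C_2\,\mathcal{C}_1)/(\sigma^2 + C_2)$ does the job.

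First I would handle $I_1$. Taking $\alpha = 2$, $S_{t-} = 1$, $K = \chi$ in the representation of $I_1$ gives $I_1 = \frac{1}{\pi}\int_0^\infty \frac{\chi^{-1-iv}}{1+iv}\,\phi_{T-t}(v-2i)\,dv$, and since $|\chi^{-1-iv}| = \chi^{-1}$ and $|1+iv|^{-1} \le \sqrt 2/(1+v)$, hypothesis (\ref{ass1}) yields $|I_1| \le \frac{\sqrt 2}{\pi\chi}\int_0^\infty \frac{|\phi_{T-t}(v-2i)|}{1+v}\,dv = \mathcal{O}(1/\chi)$. For $I_2$ the same substitution turns the representation into $I_2 = \frac{1}{\pi}\int_0^\infty \chi^{-1-iv}\,\Big(\int_{\bbR_0}(e^{(2+iv)x}-1)(e^x-1)\,\nu(dx)\Big)\,\frac{\phi_{T-t}(v-2i)}{(1+iv)(2+iv)}\,dv$. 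The crucial point is that the inner L\'evy integral is bounded uniformly in $v$: from $|e^{(2+iv)x}-1|^2 = e^{4x} - 2e^{2x}\cos(vx) + 1 \le (e^{2x}+1)^2$ we get $\bigl|\int_{\bbR_0}(e^{(2+iv)x}-1)(e^x-1)\,\nu(dx)\bigr| \le \int_{\bbR_0}(e^{2x}+1)\,|e^x-1|\,\nu(dx) =: M$, and splitting the integral at $x = 0$ and bounding $(e^{2x}+1)|e^x-1|$ by a constant multiple of $|e^x-1|$ when $x \le 0$ and of $|e^x-1| + (e^x-1)^4$ when $x \ge 0$ shows $M < \infty$ by Assumption \ref{ass-1}(1). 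Combining this uniform bound with $|1+iv|^{-1}|2+iv|^{-1} \le \sqrt 2/(1+v)$ and hypothesis (\ref{ass1}) gives $|I_2| \le \frac{M\sqrt 2}{\pi\chi}\int_0^\infty \frac{|\phi_{T-t}(v-2i)|}{1+v}\,dv = \mathcal{O}(1/\chi)$, which finishes the argument.

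The main obstacle is the uniform-in-$v$ estimate $M < \infty$ on the L\'evy integral; everything else is a direct substitution of $\alpha = 2$ into the representations from Section 2 followed by taking moduli. As an alternative to this Fourier route, one can instead reuse the decomposition $I_2 = J_1 + J_2 + J_3 + J_4$ from the proof of Theorem \ref{thm3} together with the exponential tail bound $p^\ast((a,\infty)) \le e^{-2a}\,\bbE_{\tP}[e^{2L_{T-t}}]$, whose right-hand side is finite under (\ref{ass1}); estimating $J_2$ on $D_2$ and $C_2 I_1 - J_1 - J_3 - J_4$ on $D_3$ with this tail bound again yields the $1/\chi$ rate, though the computation is longer.
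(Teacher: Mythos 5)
Your proof is correct, but it follows a genuinely different route from the paper. The paper does not touch the Carr--Madan representations of $I_1$ and $I_2$ directly: it recycles the intermediate inequality (\ref{I}) and the decomposition $I_2=J_1+J_2+J_3+J_4$ from the proof of Theorem \ref{thm3}, and then Fourier-inverts the tail probabilities $p^\ast([\log\chi,\infty))$ and $p^\ast([\log\chi-x,\infty))$ with $\alpha=2$ (via (2.17) of \cite{AIS}), obtaining a $\chi^{-2}$ bound for the first term and a $\chi^{-1}$ bound for $J_2$, with the constant involving $\int_0^\infty e^{2x}(e^x-1)^2\nu(dx)$, which is finite by Assumption \ref{ass-1}. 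You instead write $LRM(\chi)-\Delta(\chi)=(I_2-C_2I_1)/(\sigma^2+C_2)$ and bound $|I_1|$ and $|I_2|$ separately, plugging $\alpha=2$ into the Section~2 representations and using the uniform-in-$v$ bound $|e^{(2+iv)x}-1|\le e^{2x}+1$ so that the inner L\'evy integral is dominated by $M=\int_{\bbR_0}(e^{2x}+1)|e^x-1|\nu(dx)$, whose finiteness you correctly reduce to Assumption \ref{ass-1}(1) (for $x\ge 0$ it is a combination of $(e^x-1)^n$, $n=1,2,3$, all integrable). This is shorter and avoids the $J_i$ bookkeeping, at the cost of discarding the cancellation between $I_2$ and $C_2I_1$ (harmless here, since each term individually decays like $1/\chi$); the paper's route, by contrast, keeps the two theorems structurally parallel through (\ref{I}) and isolates the probabilistic tail estimates. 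Both arguments rest equally on the validity of the Fourier representations at the endpoint $\alpha=2$ under hypothesis (\ref{ass1}), so your level of rigor matches the paper's; your closing remark about the alternative tail-bound route is a side comment and not needed.
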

\begin{proof}
We show (\ref{est2}) by using (\ref{I}).
To this end we estimate $p^\ast ([\log \chi, \infty))$ and $J_{2}$ separately.

In order to estimate $p^\ast ([\log \chi, \infty))$,
we define a function $\widehat{g}_1$ as
\begin{align*}
\widehat{g}_1(z)
:= \int_{\bbR} e^{izx} \mathbf{1}_{[\log \chi, \infty)} (x) dx = -\frac{e^{iz \log \chi}}{iz} \, 
\end{align*}
for $z \in \mathbb{C}$, which implies that
\begin{align}
p^\ast ([\log \chi, \infty)) 
&= \int^\infty_{-\infty} \mathbf{1}_{[\log \chi, \infty)} (x) p^\ast (dx) \, \nonumber \\ 
&= \frac{1}{\pi} \int^\infty_{0} \widehat{g}_1(-v + i \alpha) \phi_{T-t}(v - i\alpha) dv \nonumber \\
&= \frac{1}{\pi} \int^\infty_{0} \frac{\chi^{-\alpha -iv}}{\alpha + iv} \phi_{T-t}(v - i\alpha) dv \, , \label{star1}
\end{align}
where $\alpha \in (1,2]$ and the value of (\ref{star1}) is independent of the choice of $\alpha$.
Remark that the second equation of (\ref{star1}) is from (2.17) in \cite{AIS}.
We may choose $\alpha = 2$ without loss of generality.
Hence we have
\begin{align*}
p^\ast ([\log \chi, \infty)) 
&= \frac{1}{\pi} \int^\infty_{0} \frac{\chi^{-2 -iv }}{2 +iv} \phi_{T-t}(v - 2i) dv\\
&\leq \frac{1}{\pi} \frac{1}{\chi^2} \int^\infty_{0} |\chi^{-iv}| \frac{|\phi_{T-t}(v - 2i)|}{|2 + iv|} dv \\
&= \frac{1}{\pi} \frac{1}{\chi^2} \int^\infty_0 \frac{|\phi_{T-t}(v - 2i)|}{1 + v} \l| \frac{1 + v}{2 + iv} \r| dv \, .
\end{align*}
Denoting $f(v) := \l| \frac{1 + v}{2 + iv} \r| = \frac{1 + v}{\sqrt{4 + v^{2}}}$ for $v \geq 0$,  we can see that $\frac{1 + v}{\sqrt{4 + v^{2}}} \leq \frac{\sqrt{5}}{2}$ for any $v \geq 0$.
From (\ref{ass1}),  we have
\begin{align}
p^\ast ([\log \chi, \infty)) 
&\leq \frac{\sqrt{5}}{2 \pi \chi^2} \int^\infty_0 \frac{|\phi_{T-t}(v - 2i)|}{1 + v} dv < \infty \, . \label{ine1}
\end{align}

Next we check the $J_2$ part. 
(\ref{A}) implies
\begin{align*}
J_2
&\leq \int^{\infty}_0 \int^{\log \chi}_{\log \chi - x} \chi (e^x -1)^2 p^\ast(dy) \nu(dx) \\
&\leq \chi \int^{\infty}_0 p^\ast([\log \chi - x, \infty))   (e^x -1)^2  \nu(dx) \, .
%
\end{align*}
In the same manner as the above estimation for $p^\ast ([\log \chi, \infty))$, we estimate $p^\ast([\log \chi - x, \infty))$ by using (\ref{star1}).
Replacing 
$\log \chi$ with $\log \chi - x$ and substituting 2 for $\alpha$, 
we have
\begin{align}
p^\ast ([\log \chi - x, \infty)) 
&= \frac{1}{\pi} \int^\infty_{0} \frac{\chi^{-2 -iv} e^{(2 + iv)x}}{2 + iv} \phi_{T-t}(v - 2 i) dv \, . \nonumber 
\end{align}
Hence we have
\begin{align}
J_2
&\leq \chi \int^{\infty}_0 p^\ast([\log \chi - x, \infty)) (e^x -1)^2  \nu(dx) \nonumber \\
&= \frac{\chi}{\pi} \int^{\infty}_0 \int^\infty_{0} \frac{\chi^{-2 -iv} e^{(2 + iv)x}}{2 + iv} \phi_{T-t}(v - 2i)  (e^x -1)^2 dv  \nu(dx) \nonumber \\
&\leq \frac{1}{\pi} \frac{1}{\chi} \int^{\infty}_0 \int^\infty_{0} \l| \frac{\chi^{-iv} e^{ivx}}{2 + iv } \phi_{T-t}(v - 2i) \r| e^{2x} (e^x -1)^2 dv  \nu(dx) \nonumber \\
&\leq \frac{\sqrt{5}}{2 \pi \chi} \int^\infty_0 \frac{|\phi_{T-t}(v - 2i)|}{1 + v} dv \int^{\infty}_0 e^{2x} (e^x -1)^2 \nu(dx) \, . \label{B}
\end{align}
Noting that (\ref{ass1}), and $ \int^{\infty}_0 e^{2x} (e^x -1)^2 \nu(dx) < \infty$ from Assumption \ref{ass-1},
we obtain $|J_2| < \infty.$

From (\ref{I}), (\ref{ine1}), and (\ref{B}) we obtain 
\begin{align*}
|LRM(\chi) - \Delta(\chi)| \leq \frac{\sqrt{5}}{2 \pi (\sigma^{2} + C_{2}) \chi}
&  \int^{\infty}_{0} \frac{|\phi_{T-t}(v - 2i)|}{1 + v} dv  \nonumber \\
& \times \l\{C_{2}^{-} +   \int^{\infty}_{0} e^{2x}(e^{x} - 1)^{2} \nu(dx)  \r\} . 
\end{align*}
\end{proof}

\begin{rem}
The condition (\ref{ass1}) is not necessarily satisfied for the case of $\sigma =0$, although it holds whenever $\sigma>0$.
Thus, the proof of Proposition 2.1 in \cite{AIS} includes an error.
On the other hand, \cite{AIS} treated only Merton and VG models, and we can see (\ref{ass1}) for both models,
because $\sigma > 0$ for Merton models, and Proposition 4.7 in \cite{AIS} for VG models.
\end{rem}

\section{Numerical results}


In this section, we implement numerical experiments for two typical exponential L\'evy models, known as Merton models and VG models.
We obtain in Section 3 two estimations for the difference $|LRM(\chi)-\Delta(\chi)|$, and see that the difference converges to $0$ as $\chi$ tends to $0$ or $\infty$.
On the other hand, we are interested in the behaviour of the difference around ``at the money" from the practical point of view.
To investigate it, we compute the values of the right-hand side of (\ref{est}) in Theorem 3 as an upper estimation, and compare them with the values of the difference.
Remark that the numerical scheme developed in this section is based on the results of \cite{AIS}.


\subsection{The Merton jump-diffusion models}

We consider the case where $L = \log(S / S_{0})$ is given
as a Merton jump-diffusion process,
which consists of a diffusion component with volatility $\sigma>0$ and
compound Poisson jumps with three parameters, $m\in\bbR$, $\delta>0$, and
$\gamma>0$.
Note that $\gamma$ represents the jump intensity, and
the sizes of the jumps are distributed normally with mean $m$ and
variance $\delta^2$.
Thus, its L\'evy measure $\nu$ is given by
\begin{align*}
\nu(dx)
=\frac{\gamma}{\sqrt{2\pi}\delta}\exp\left\{-\frac{(x-m)^2}{2\delta^2}\right\}dx \, . \label{MertonNu}
\end{align*}
Note that the first condition of Assumption~\ref{ass-1} is satisfied
for any $m\in\bbR$, $\delta>0$, and $\gamma>0$.
Thus we consider only parameter sets satisfying the second condition of
 Assumption~\ref{ass-1}.
An analytic form of $\phi_{T-t}$ was given in Proposition 3.1 of \cite{AIS}.
%
We compute the right-hand side of (\ref{est}) in Theorem 3 with FFT.
Note that the constant $C_{2}^{-}$ is given as follows:
\begin{align*}
C_2^- = 
\gamma \l[ e^{2(\delta^{2} + m)} \Phi \l( -\frac{2 \delta^{2} + m}{\delta} \r)
	-2 e^{\frac{\delta^{2} + 2m}{2}} \Phi \l( -\frac{\delta^{2} + m}{\delta} \r)
	+\Phi \l( -\frac{m}{\delta} \r)
 \r] \, ,
\end{align*}
where $\Phi$ is the standartd normal cumulative distribution function.
Moreover, $p^*((-\infty, \log\chi])$ is calculated with FFT as follows:
\begin{align*}
p^*((-\infty, \log\chi]) = 
1 - \frac{1}{\pi}\int^\infty_0 \chi^{-\alpha - i v} \frac{\phi_{T-t}(v - i \alpha)}{\alpha + i v} dv \, .
\end{align*}

\subsection{The variance gamma models}
Next we consider the case where
$L$ is given as a variance gamma process with three parameters $\kappa>0$, $m\in\bbR$, and $\delta>0$, which is defined as a time-changed Brownian motion with volatility $\delta$,
drift $m$, and subordinator $G_t$, where $G_t$ is a gamma process
with parameters $(1/\kappa, 1/\kappa)$.
In summary, $L$ is represented as
\begin{align*}
L_t=mG_t+\delta B_{G_t}\ \ \mbox{ for }t\in[0,T]\,,
\end{align*}
where $B$ is a one-dimensional standard Brownian motion.
Moreover, the L\'evy measure of $L$ is given by
\begin{align*}
\nu(dx)
= C(\mathbf{1}_{\{x<0\}}e^{-G|x|}+\mathbf{1}_{\{x>0\}}e^{-M|x|})\frac{dx}{|x|} \, ,
\end{align*}
where
\begin{align*}
C :=\frac{1}{\kappa} >0, \quad
G :=\frac{1}{\delta^2}\sqrt{m^2+\frac{2\delta^2}{\kappa}}+\frac{m}{\delta^2} >0, 
\quad
M :=\frac{1}{\delta^2}\sqrt{m^2+\frac{2\delta^2}{\kappa}}-\frac{m}{\delta^2} > 0. 
\end{align*}
In addition, we assume $M>4$, which ensures
the first condition of Assumption~\ref{ass-1}.
An analytic form of $\phi_{T-t}$ was given in Proposition 4.5 of \cite{AIS}.
In order to compute the right-hand side of (\ref{est}),
we calculate the constants $C_{2}^{+}$ and $C_{2}^{+}$ explicitly as follows:
\begin{align*}
C_2^+ = C \log\l(\frac{(M - 1)^2}{M (M - 2)} \r) \quad \mbox{and} \quad
C_2^- = C \log\l(\frac{(G + 1)^2}{G (G + 2)} \r).
\end{align*}
In the same manner as Merton models, $p^*((-\infty, \log\chi])$ is calculated with FFT as follows:
\begin{align*}
p^*((-\infty, \log\chi]) = 
1 - \frac{1}{\pi}\int^\infty_0 \chi^{-\alpha - i v} \frac{\phi_{T-t}(v - i \alpha)}{\alpha + i v} dv \, .
\end{align*}

\subsection{Numerical methods, data, and results}

We calibrate models' parameter sets against a set of European call options on the S\&P 500 Index.
Note that models' prices are defined as the expected value under the MMM $\tP$.
The data set consists of 81 mid-prices at the close of the market on 20 April 2016.
On the day, the S\&P 500 Index closed at 2102.4.
The markets' prices consist of seven expirations, which are in 20 May 2016, 17 June 2016, 15 July 2016, 16 September 2016, 16 December 2016,
20 January 2017, and finally 17 March 2017.  
We calibrate Merton and VG models to this data set.
We compute the root-mean-squared error (RMSE) between the market's and model's prices.
This statistic is an measure of the quality of fit.
We estimate model parameter sets by minimizing RMSE via SQP method.

We provide calibration results for Merton and a VG models.
The estimated parameter set for the Merton case is 
$\mu=4.0073$, $\sigma=0.0435$, $\gamma=0.0054$, $m=-0.0697$, and $\delta=0.0889$.
Under this parameter set, RMSE is $3.7809$.
The above estimated parameter set satisfies the second condition of
Assumption~\ref{ass-1}.
We set $T =1$, $t = 0.95$, and $S_{t} = 2102.4$, respectively.
In Figure \ref{fig-Merton} (a), the values of $LRM(\chi)$ and $\Delta(\chi)$ are plotted separately for $K=1900, 1950, \dots, 2500$ $(\chi = 0.9037, 0.9275, \cdots, 1.1891)$.
The values of $|LRM(\chi) - \Delta(\chi)|$ and their upper estimations using (\ref{est}) are seen as Figure \ref{fig-Merton} (b).
FFT parameters are chosen as $N = 2^{14}$, $\eta = 0.025$ and $\alpha = 1.75$.
%
As for the VG case, the estimated parameter set is given as follows:
$C = 6.7910$, 
$G = 30.1807$, and 
$M = 33.1507$.
Under this parameter set, RMSE is $6.429$.
This parameter set also satisfies the second condition of Assumption \ref{ass-1}.
We implement the same numerical experiments as the Merton case.
%
Figure \ref{fig-VG} shows their results.

We deduce three points from our numerical experiments:
(i) The upper estimate from (\ref{est}) fits very well into the real values of $|LRM(\chi) - \Delta(\chi)|$ for the Merton case.
(ii) The values of $|LRM(\chi) - \Delta(\chi)|$ 
for the VG case are larger than ones for the Merton case.
(iii) For the VG case, the behaviour of $|LRM(\chi) - \Delta(\chi)|$ 
is unstable around ``at the money''.

\section{Conclusions}
We derive inequality estimations for $|LRM(\chi) - \Delta(\chi)|$ in Theorems 3 and 4.
In particular, we show the difference converges to $0$ with order less than $\mathcal{O}(\chi)$ and $\mathcal{O}(1/\chi)$ when $\chi$ tends to $0$ and $\infty$, respectively.
We compute the behaviour of $|LRM(\chi) - \Delta(\chi)|$ for two models: Merton and VG models.
For any exponential L\'evy model, computing the right-hand side of (\ref{est}) in Theorem 3 is a simple way to evaluate roughly the distance between the two strategies.
In particular, Figure 1 shows that it is appropriate to use delta hedging strategies as a substitute for LRM strategies.
On the other hand, Theorem 4 gives an estimation for large $\chi$, and seems not to be useful to evaluate around ``at the money".


\begin{figure}[htb]
\vspace{-3.5cm}
  \begin{minipage}[t]{1\textwidth}
    \centering
\hspace{-1cm}
    \includegraphics[keepaspectratio, scale=0.5]{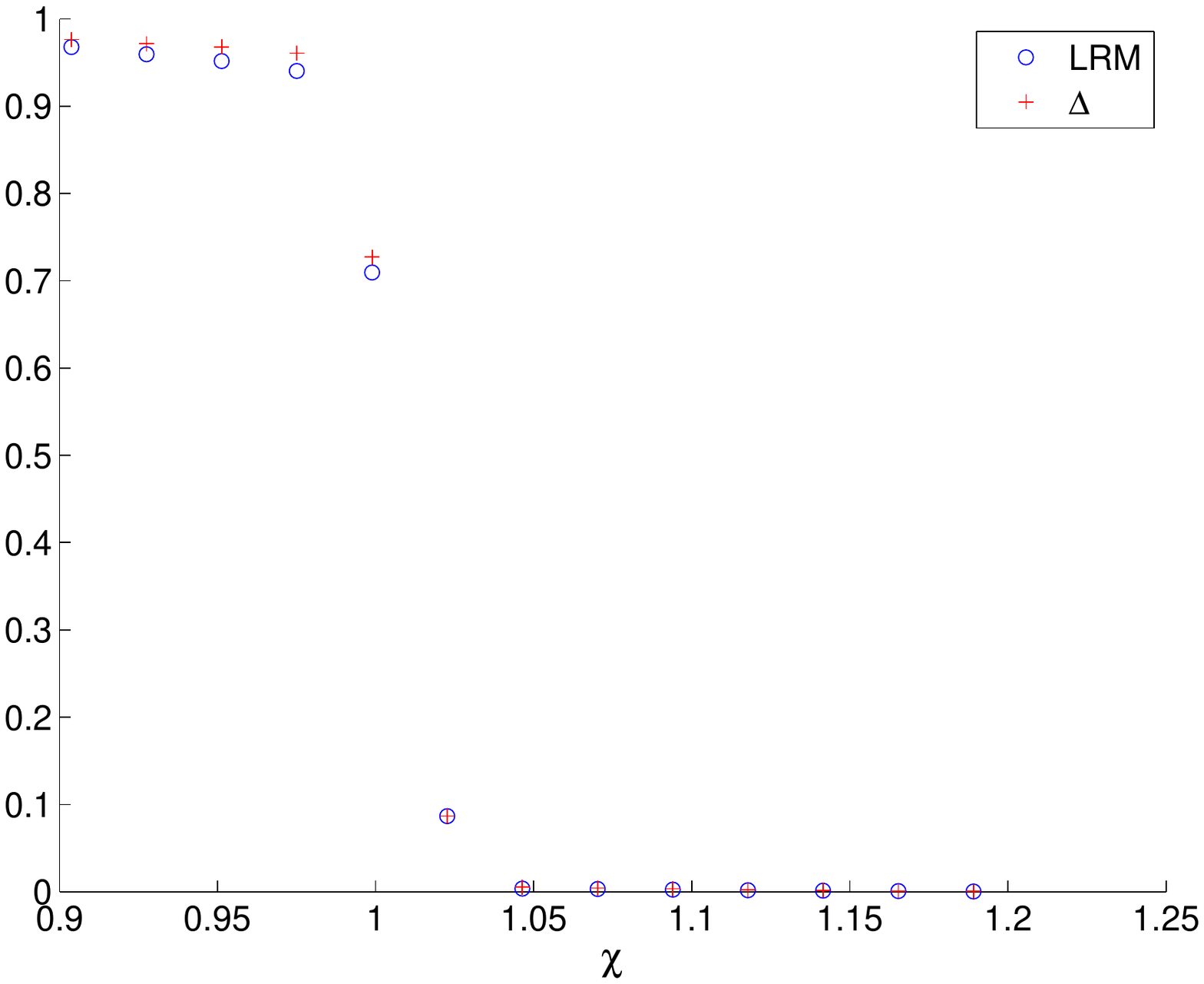} \vspace{-3cm}
    \subcaption{Values of $LRM(\chi)$ and $\Delta(\chi)$ when $t$ is fixed to 0.95 vs. moneyness $\chi$ from 0.9037 to 1.1891 (K from 1900 to 2500). Blue circles and red crosses  represent the values of $LRM(\chi)$ and $\Delta(\chi)$, respectively. }
  \end{minipage}
\\
    \begin{minipage}[t]{1\textwidth}
   \centering
 \vspace{-5cm}
    \includegraphics[keepaspectratio, scale=0.5]{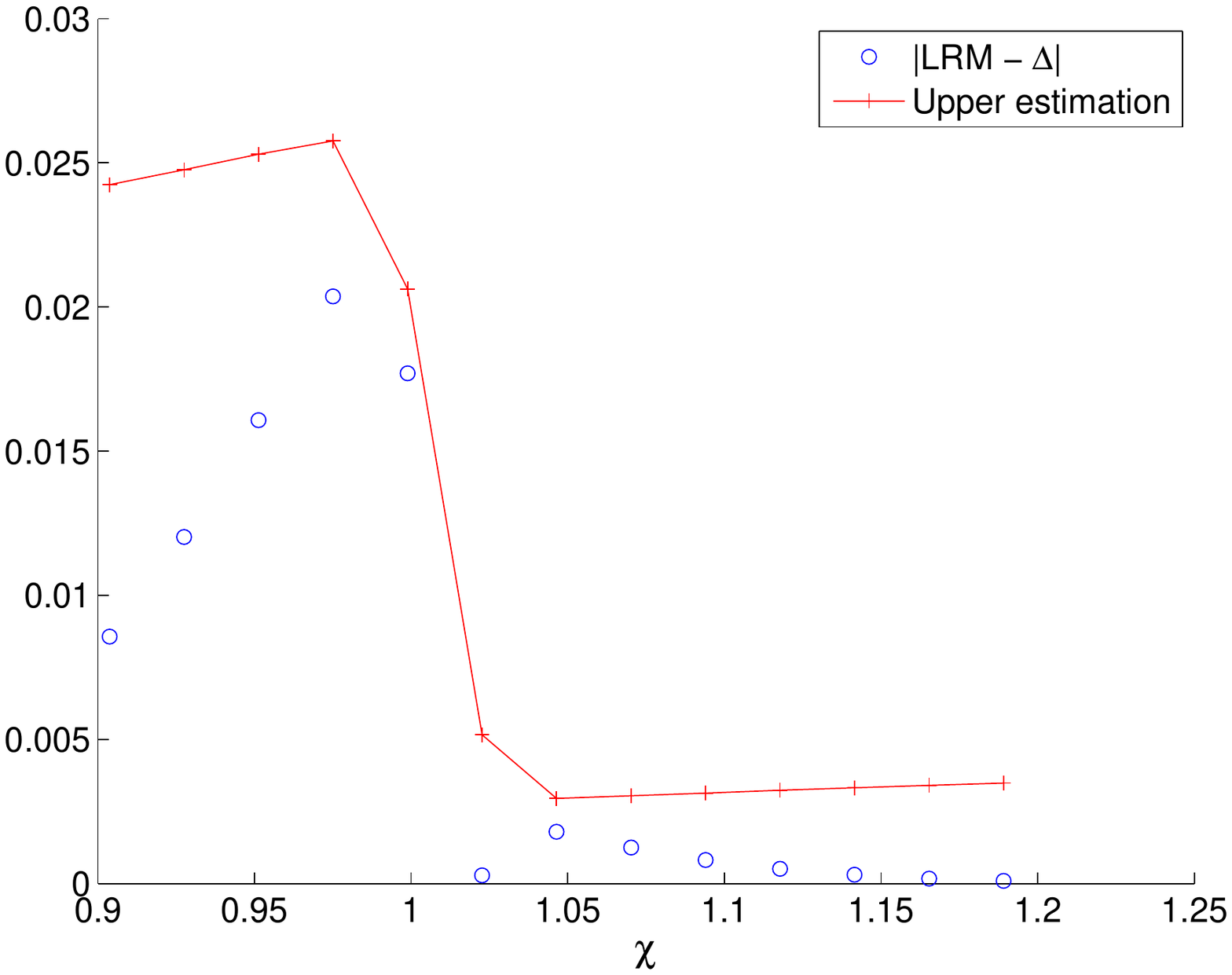} \vspace{-3cm}
    \subcaption{ Values of $|LRM(\chi) -\Delta(\chi)|$ and upper estimations when $t$ is fixed to 0.95 vs. moneyness $\chi$ from 0.9037 to 1.1891 (K from 1900 to 2500). Blue circles and red crosses  represent the values of $|LRM(\chi) -\Delta(\chi)|$ and upper estimations, respectively. }
  \end{minipage} 
  \caption{Merton model with the estimated parameters, $\mu=4.0073$, $\sigma=0.0435$, $\gamma=0.0054$, $m=-0.0697$, and $\delta=0.0889$
  } \label{fig-Merton}
\end{figure}


\begin{figure}[htb]
\vspace{-3.5cm}
  \begin{minipage}[t]{1\textwidth}
    \centering
    \includegraphics[keepaspectratio, scale=0.5]{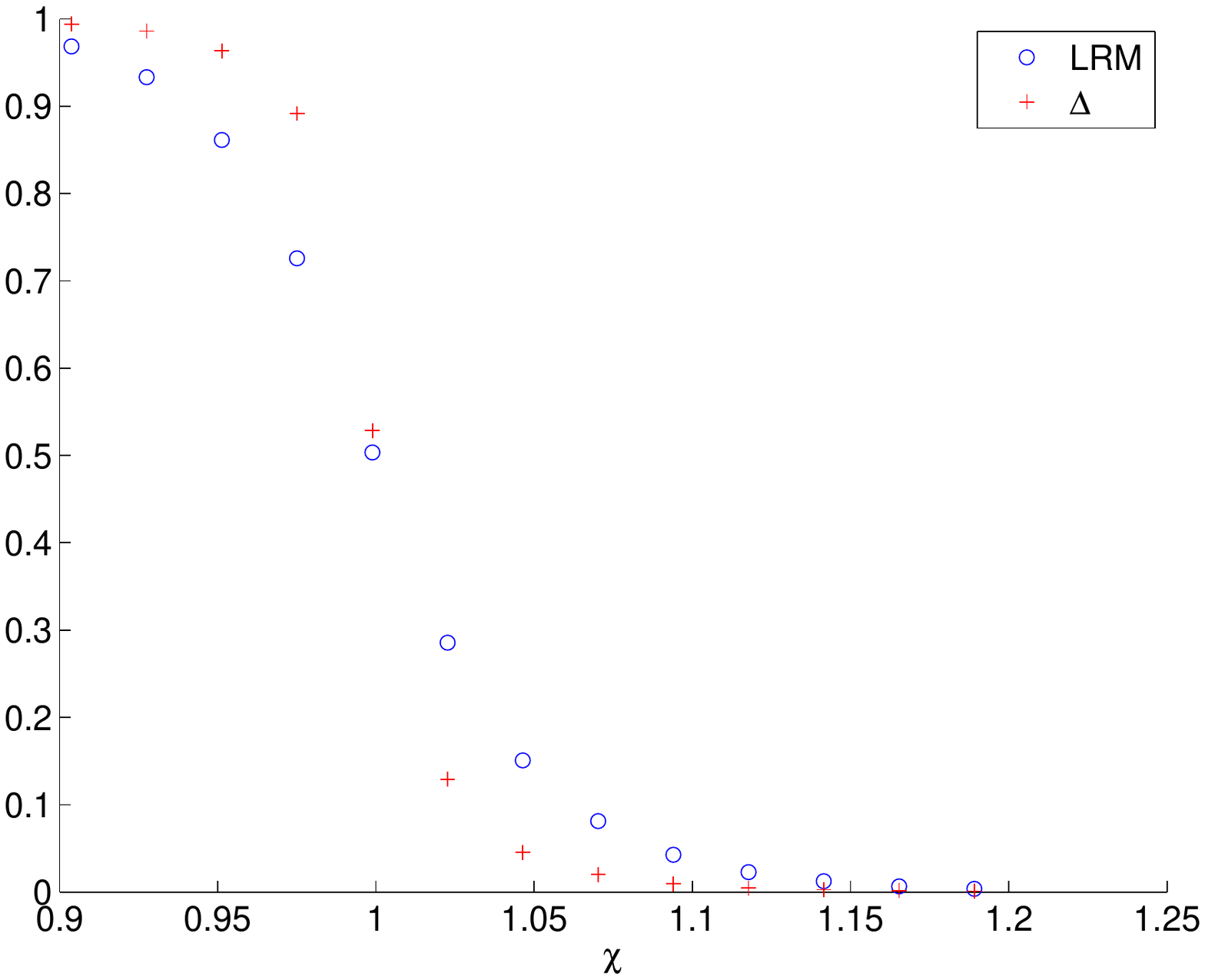} \vspace{-3cm}
    \subcaption{Values of $LRM(\chi)$ and $\Delta(\chi)$ when $t$ is fixed to 0.95 vs. moneyness $\chi$ from 0.9037 to 1.1891 (K from 1900 to 2500).}
  \end{minipage}
\\
  \begin{minipage}[t]{1\textwidth}
    \vspace{-5cm}
    \centering
    \includegraphics[keepaspectratio, scale=0.5]{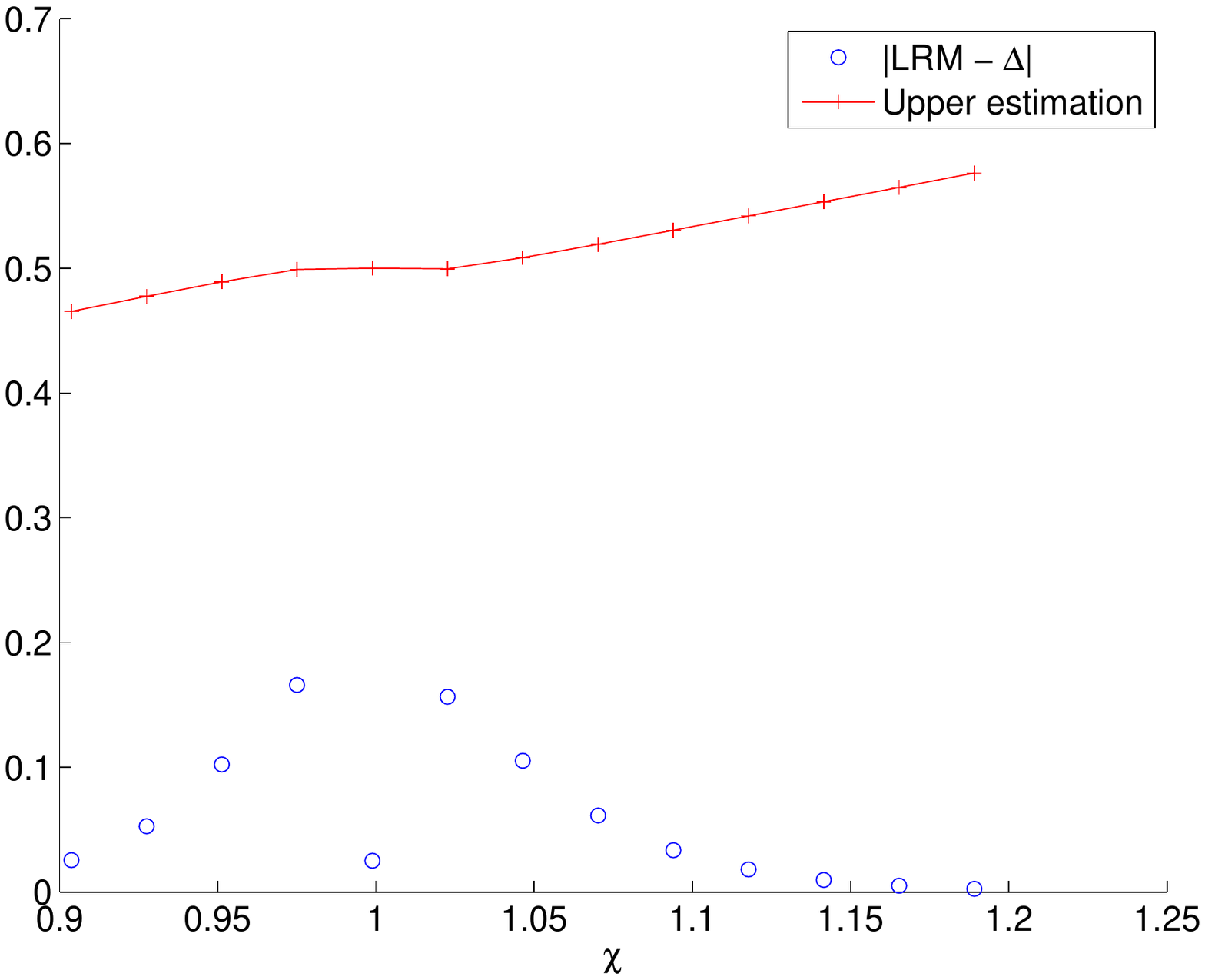} \vspace{-3cm}
    \subcaption{Values of $|LRM(\chi) -\Delta(\chi)|$ and upper estimations when $t$ is fixed to 0.95 vs. moneyness $\chi$ from 0.9037 to 1.1891 (K from 1900 to 2500).}
  \end{minipage} 
  \caption{VG model with estimated parameters, $C = 6.7910$, 
$G = 30.1807$, and 
$M = 33.1507$
  } \label{fig-VG}
\end{figure}


\begin{center}
{\bf Acknowledgements}
\end{center}
Takuji Arai was supported by JSPS Grant-in-Aid for Scientific Research (C) No.15K04936.
Yuto Imai was supported by Waseda University Grants for Special Research Projects (Project number: 2016K-174 and 2016B-123).

\end{document}